\newtheorem{theorem}{Theorem}%[section]
\title{\LARGE \bf
Optimal Relay Selection with Non-negligible Probing Time
}
\author{Yang Liu$^{*}$, Yi Ouyang$^{*}$ and Mingyan Liu% <-this % stops a space <-this % stops a space
\thanks{Authors are from Department of Electrical Engineering and Computer Science, University of Michigan, Ann Arbor. Contact information : \{youngliu, ouyangyi, mingyan\}@umich.edu}
\thanks{$^*$ indicates equal contribution.}%
}
\begin{document}

\maketitle
\thispagestyle{empty}
\pagestyle{empty}

%%%%%%%%%%%%%%%%%%%%%%%%%%%%%%%%%%%%%%%%%%%%%%%%%%%%%%%%%%%%%%%%%%%%%%%%%%%%%%%%
\begin{abstract}

In this paper an optimal relay selection algorithm with non-negligible probing time is proposed and analyzed for cooperative wireless networks.
Relay selection has been introduced to solve the degraded bandwidth efficiency problem in cooperative communication. Yet complete information of relay channels often remain unavailable for complex networks which renders the optimal selection strategies impossible for transmission source without probing the relay channels. 
Particularly when the number of relay candidate is large,
even though probing all relay channels guarantees the finding of the best relays at any time instant, the degradation of bandwidth efficiency due to non-negligible probing times, which was often neglected in past literature, is also significant. 
In this work, a stopping rule based relay selection strategy is determined for the source node to decide when to stop the probing process and choose one of the probed relays to cooperate with under wireless channels' stochastic uncertainties. This relay selection strategy is further shown to have a simple threshold structure. At the meantime, full diversity order and high bandwidth efficiency can be achieved simultaneously. Both analytical and simulation results
are provided to verify the claims.

\end{abstract}

\begin{keywords}
Optimal relay selection, stopping rule, diversity gain, probing times
\end{keywords}
%%%%%%%%%%%%%%%%%%%%%%%%%%%%%%%%%%%%%%%%%%%%%%%%%%%%%%%%%%%%%%%%%%%%%%%%%%%%%%%%
\section{Introduction}

There is a quickly increasing demand for high data rates in wireless communication along with the skyrocketing usage of mobile devices. Increasing transmission diversities is among the most promising techniques and is attracting much attention \cite{zhang2007high,heath2005switching,lehmann2007evaluation}.
Transmission diversity in communication systems provides more than one copy of the transmitted signal
to the destination node, with which the destination can decode the transmitted signal even if some copies of the signal are distorted due to the time varying nature of transmission channels; therefore the system performance can be expected to improve significantly. More precisely the diversity order of a communication system can be measured by the relationship between the error probability, denoted by $\mathbf{P_e}$, and the Signal-Noise-Ratio (SNR) as follows :
a system has diversity order $d$ if
\begin{equation}
\mathbf{P_e} = O(\text{SNR}^{-d}).
\end{equation}

Cooperative communication techniques have been introduced to increase system's diversity order (see \cite{nosratinia2004cooperative}).
In a cooperative network, when a node receives a packet not destined for it, instead of simply discarding the packet it can choose to help to relay and via such help, the source transmission can improve its diversity by sending signal through the relay channels.
Although cooperative communications look promising, the gain is not immediate as cooperation incurs a cost of wireless resources such as frequency, active air time, and power resources to enable relaying.
This trade-off between gaining diversity and conserving the wireless resources is mainly two-fold.

The first aspect is due to signal transmission.
Since wireless resources are spent for relaying the signal to the destination, more bandwidth is needed when more relay nodes are involved in a transmission.
To see this point more clearly, if $N$ nodes are relaying the signal, the bandwidth efficiency degrades to $\frac{1}{N+1}$ dues to the fact that there are all-together $N$ relays and the source are transmitting the same copy of signal. This problem of degraded bandwidth efficiency is solved by the introduction of several relay selection protocols as detailed in \cite{laneman2004cooperative,bletsas2006simple,zhao2006improving,krikidis2008amplify,ibrahim2008cooperative,vicario2009opportunistic, liu2009cooperative,jing2009single,zou2010adaptive,hong2006energy,li:TWC11,
jamal2011interference,adebo:milcom14}. With relay selection, only one relay instead of all is selected for the purpose of cooperation.
Therefore, the bandwidth efficiency is leveraged to $\frac{1}{2}$ instead of $\frac{1}{N+1}$ which decreases as the number of relays increases.
A relay selection protocol is called to achieve full diversity order if the resulting diversity order is $N+1$ for a network with $N$ relay nodes.
When complete relay channel information is available,
relay selection protocols with different selection metrics have been proved to achieve full diversity order in the literature (see \cite{laneman2004cooperative,bletsas2006simple,zhao2006improving,krikidis2008amplify,ibrahim2008cooperative,vicario2009opportunistic, liu2009cooperative,jing2009single,zou2010adaptive}).
For example in \cite{jing2009single}, best relay selection (choosing the relay with best channel SNR), best worse channel selection (choosing the relay with best worse channel condition) and best harmonic mean method have been proved to achieve full diversity order.

%
%the strategy for the source is simply to choose the best index calculated among all relays. XXX. While the optimal relay selection protocol can achieve full diversity order ($N+1$ diversity gain), the bandwidth efficiency under this protocol is $\frac{1}{2}$ instead of $\frac{1}{N+1}$.% in which case each relay can use $\frac{P}{2}$ power instead of $\frac{P}{N+1}$.

The second aspect, which was often neglected, comes from channel probing. Due to the dynamic nature of channel conditions, channel probing is needed for exploring each relay channel's instantaneous transmission quality in order to find the best relay to cooperate with. Specifically we consider the following channel probing procedure. Before each transmission, the source sequentially probes the channels between the source and relays, and the channels between relays and the destination. The source can stop the probing process at any time, and select one of the probed relays to cooperate with. For each probing, a carrier sensing packet is sent to reveal each relay channel's instantaneous quality  (see \cite{kanodia2004moar,liu2006sensing}).
Despite the efforts towards reducing the size of a sensing packet, the probing time remains non-negligible.
In this regards, even though probing all channels guarantees the discovery of the set of best relays, the degradation of bandwidth efficiency due to probing times is significant, especially when the number of relays is large. Therefore opportunistic channel probing (as commonly defined in the Opportunistic Spectrum Access (OSA) \cite{survey:osa}) is needed for a bandwidth efficient system.

In this work we design a bandwidth efficient relay selection strategy in a relay network with non-negligible probing time. Technically we adopt theory of optimal stopping rule (which has been previously applied to opportunistic spectrum access, for instance \cite{zheng2009distributed,liu2013stay} and references therein.) to solve our relay selection problem. Our main contributions are as follows.
\begin{itemize}
\item We determine the optimal relay selection strategy and we show the optimal strategy is a simple threshold enabled stopping rule.

\item Our relay selection strategy achieves full diversity order.

\item This optimal relay selection results in bounded probing time, which further implies a relay selection scheme with high bandwidth efficiency.
\end{itemize}

The rest of our paper is organized as follows.
We formulate the relay selection problem as an optimal stopping problem in Section \ref{sec:problem}.
In Section \ref{sec:solution}, we solve the optimal stopping problem and define the corresponding relay selection strategy.
We present in Section \ref{sec:performance} the analysis of diversity order and bandwidth efficiency of the proposed strategy.
In Section \ref{sec:simulation}, we verify our results through simulation and conclude our paper in Section \ref{sec:conclusion}.

%%%%%%%%%%%%%%%%%%%%%%%%%%%%%%%%%%%%%%%%%%%%%%%%%%%%%%%%%%%%%%%%%%%%%%%%%%%%%%%%

\section{System Model and Problem Formulation}
\label{sec:problem}
\subsection{System model}
%as depicted in Fig.\ref{system1} .
We consider a relay network with one source $S$, one destination $D$ and $N$ relays $R_1,R_2,...,R_N$ (as depicted in Fig.\ref{system1}) . We model the channels between any two nodes ($S$, $D$ or $R_1,R_2,\dots,R_N$) as discrete time independent Rayleigh fading channels.
Specifically the direct source to destination channel, the $N$ source to relay channels, and the $N$ relay to destination channels are modeled as follows
	\begin{align}	
	&y_{s,d} = h_{s,d}\sqrt{P_s}x_{s}+ \eta_{s,d},\\	
	&y_{s,n} = h_{s,n}\sqrt{P_s}x_{s}+\eta_{s,n}\text{, for }n = 1,2,...,N,\\
    &y_{n,d} = h_{n,d}\sqrt{P_r}x_{n}+\eta_{n,d}\text{, for }n = 1,2,...,N,
	\end{align}
	where $P_s$  and $P_r$ are the transmission power at the source and relays respectively; $x_{s}, x_{n}$ are the signals transmitted or retransmitted by the source and the $n$th relay respectively;
	$y_{s,d}, y_{n,d}, y_{s,n}, n=1,2,\dots,N$ are the received signals at the destination $D$ from sources and $N$ relays, and the $n$th relay $R_n,n=1,2,\dots,N$, respectively. The channel noises $\eta_{s,d}, \eta_{s,n}, \eta_{n,d}$ are modeled as Gaussian random variables with zero-mean and variance $\eta_0$.
	\begin{figure}[h]
    \centering
        \includegraphics[width=0.2\textwidth]{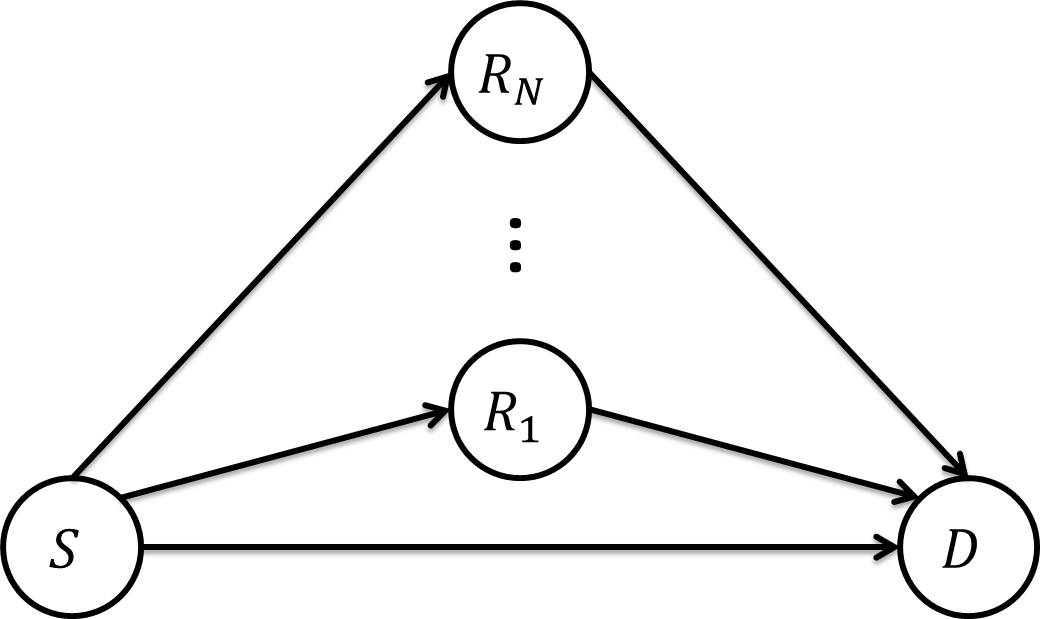}
    \caption{The relay network}
\label{system1}
\end{figure}
%	where ,  and $$ is the received signal at the destination $D$ from the corresponding relay node.
	
For all the channels, $h_{s,d},h_{s,n},h_{n,d},n=1,2,\dots,N$ are coefficients that capture the effect of path loss, fading etc.
We assume these channel coefficients remain the same during the transmission of one source signal.
Moreover, the channel coefficients $h_{s,d},h_{s,n},h_{n,d},n=1,2,\dots,N$ are modeled as zero-mean, complex Gaussian random variables with unit variance for each signal transmission. For simplicity of later analysis, we use $\omega_{s,d},\omega_{s,n},\omega_{n,d},n=1,2,\dots,N$ to denote the magnitude of the channel coefficients:
	\begin{align}
		&\omega_{s,d} = |h_{s,d}|^2,\\
		&\omega_{s,n} = |h_{s,n}|^2, n = 1,2,\dots,N,\\
		&\omega_{n,d} = |h_{n,d}|^2, n = 1,2,\dots,N.
	\end{align}
	
For each signal transmission, besides the direct transmission to the destination, the source node selects one relay out of the $N$ candidates to cooperate using perfect Direct Forwarding (DF) protocol (for technical details please refer to \cite{liu2009cooperative}).
That is, when relay $R_n$ is selected, the source $S$ will first send the signal $x_s$ to both the relay $R_n$ and the destination $D$.
Let $\tilde{x}_{s,n}$ denotes the decoded signal at $R_n$, and the transmitted signal of relay $R_n$ is given by
	\begin{align}
			x_{n} = \left\{
			    \begin{array}{rl}
			    x_{s}, & \text{if } x_s = \tilde{x}_{s,n},\\
			    0, & \text{if }x_s \neq \tilde{x}_{s,n}.
			    \end{array} \right.
	\end{align}	
i.e., the relays will forward the message/signal only if it has been correctly decoded.

%\com{should probably add why we choose only one relay instead of using all of them. A 1/2 bandwidth efficiency compared with 1/(N+1)? I noticed this issue has been mentioned and noticed later but I am feeling we are missing the explaination here.}

\subsection{The Relay Selection Problem}

We formulate the relay selection problem with channel probing. For each relay $R_n$ we adopt the following index $\omega_n$ introduced in \cite{liu2009cooperative} as the criteria for selecting relay,
\begin{align}
\omega_n = \frac{2q_1q_2\omega_{s,n}\omega_{n,d}}{q_1\omega_{n,d}+q_2\omega_{s,n}}
\label{eq:omega_n}~,
\end{align}
with $q_1, q_2$ being constants as defined in \cite{liu2009cooperative}.
When complete information (all $\omega_{s,n}, \omega_{n,d}$s) of all channels is available, it is shown in \cite{liu2009cooperative} that selecting the relay with maximum $\omega_n$ gives full diversity gain, and this is also the major reason we adopt $\omega_n$ as the index for relay $R_n,n=1,2,\dots,N$. Moreover if we view $\omega_n$ as an approximated channel gain of using relay $R_n$, $P_s \cdot \omega_n$ becomes the approximated signal power at the destination through $R_n,n=1,\dots,N$.

Before each transmission, we assume the source sequentially probes the channels between the source and
relays, and the channels between relays and the destination. The channel probing procedure, as shown in  Fig. \ref{fig:probing}, can stop at any stage $n\leq N$ when the channels connected to relays $R_1,R_2,\dots,R_n$ are probed and select one relay $R_k,k\leq n$ to cooperate with.
Let $T_s$ be the probing time to probe the channels between $(S,R_n)$ and $(R_n, D)$ (for example, $T_s$ could be a cycle of RTS/CTS period for IEEE 802.11 channel sensing protocol)
and $T_{tran}$ be the time for signal transmission of the source and the selected relay. Then for any $n=1,2,\dots,N$, the time to probe relays $R_1,R_2,\dots,R_n$ is $nT_s$, and the bandwidth efficiency is given by
\begin{eqnarray}
c_n = \frac{T_{tran}}{T_{tran}+nT_s} =\frac{1}{1 + n\tau},
\end{eqnarray}
where $\tau = \frac{T_s}{T_{tran}}$ denoting the ratio between the probing time and the transmission time.

\begin{figure}[h]
    \centering
        \includegraphics[width=0.4\textwidth]{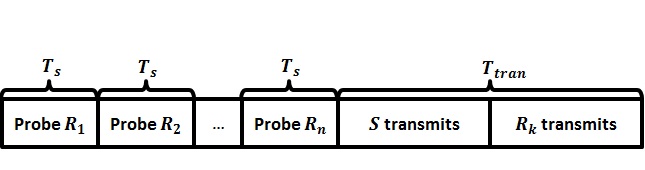}
    \caption{The Channel Probing Procedure}
\label{fig:probing}
\end{figure}

We define the signal to noise ratio (SNR) $\gamma$ of the system to be $\gamma=P/\eta_0$,
%	\begin{align}
%	\gamma=P/\eta_0~,
%	\label{eq:SNR}
%	\end{align}
where $P$ is the total transmission power (including source and the selected relay). Consider a constant power scheme and for a fair comparison, the sum of transmission power at the source and the selected relay should equal the total power $P$ times the bandwidth efficiency, i.e.
	\begin{align}
	P_s+P_r = P \cdot c_n~.\label{eqn:power}
	\end{align}
The reason we take total transmission power to be $P \cdot c_n$ in \eqref{eqn:power} is due to the fact that $(1-c_n)$ fraction of total power $P$ is used in channel probing. Denote $r$ as the power ratio $r = \frac{P_s}{P \cdot c_n}$. Then,
\begin{align}
&P_s = rP \cdot c_n, \\
&P_r = (1-r)P \cdot c_n.
\end{align}

When the channel probing process stops at stage $n$ and relay channels $\omega_{s,1},\omega_{s,2},\dots,\omega_{s,n}$ are probed, the
maximum relay index is $\max_{k\leq n} \omega_n$, and the corresponding approximated signal power at the destination is given by
\begin{align}
P_s\max_{1 \leq k \leq n}\omega_k= r P\cdot c_n\Omega_n,\label{eq:asp}
\end{align}
where
\begin{align}
\Omega_n = \max_{1\leq k \leq n}\omega_k, n=1,2,\dots,N~.
\end{align}

The relay selection problem defined in this paper is to decide when to stop probing the relay channels and which probed relay to cooperate with to maximize the expected value of the approximated signal power at the destination.\footnote{The objective function defined in \eqref{stoppingproblem_old} is an intuitive measure. However we will show in Section IV the stopping rule maximizing this particular objective achieves full diversity order.}
Formally, we want to choose a stopping time $N_s$ with respect to the sequential channel probings and the channel realizations $\{\omega_{s,n},\omega_{n,d},n=1,2,\dots,N\}$ and we formulate the relay selection problem as an optimal stopping problem as follows.
\begin{align}
\begin{array}{ccll}
&\max_{N_s}& E[rP\cdot c_{N_s} \Omega_{N_s}]\\
&\text{s.t.}& N_s \text{ is a stopping time}, N_s\leq N \quad a.s.
 \end{array}
 \label{stoppingproblem_old}
\end{align}

Assuming $rP$ to be a constant, then \eqref{stoppingproblem_old} is equivalent to the following optimal stopping problem.
\begin{align}
\begin{array}{ccll}
&\max_{N_s}& E[c_{N_s} \Omega_{N_s}]\\
&\text{s.t.}& N_s \text{ is a stopping time}, N_s\leq N \quad a.s.
 \end{array}
 \label{stoppingproblem}
\end{align}

\section{Solution to the Optimal Stopping Problem}
\label{sec:solution}
%\subsection{A Dynamic programming formulation and solutions}

From the theory of optimal stopping times (see \cite[chap. 2]{uclastopping})
, we define the value functions with the information state $\{\Omega_n, 1\leq n \leq N \}$ as follows.

For stage $n=1,2,\dots,N$, define
\begin{align}
V_n(x) := \max_{N_s\text{ is a stopping time}, n \leq N_s \leq N}\mathbf{E}[c_{N_s}\Omega_{N_s}|\Omega_n = x]~,
\label{eq:valuefunction}
\end{align}
and we can write down the backward induction for the value functions as follows,
\begin{align}
&~~~~~~~~~V_N(x) = c_N \cdot x,\nonumber \\
&V_n(x) =\max\left\{c_n x, \mathbf{E}[V_{n+1}(\Omega_{n+1})|\Omega_n=x] \right\}, n \leq N-1.
\label{eq:dynamicprogram}
\end{align}
From the backward induction, we obtain the structure of the optimal stopping time stated in the following theorem.
\begin{theorem}
\label{thm:stoppingrule}
The optimal stopping rule $N_s^*$ for the optimal stopping problem described by (\ref{stoppingproblem}) is given by thresholds $t_1,t_2,\dots,t_{N-1}$ such that
\begin{align}
N_s^* = \inf\{n\geq1 : c_n \Omega_n \geq t_n \},
\label{eq:optimalstopping}
\end{align}
where the threshold $t_n,~n=1,2,\dots,N-1$ is the unique solution of the following fixed point equations
\begin{align}
c_{n} t_n  = \mathbf{E}[V_{n+1}(\Omega_{n+1})|\Omega_{n}=t_n].
\label{eq:threshold}
\end{align}
Consequently, the value functions satisfy
\begin{align}
V_{n}(x) = \left\{ \begin{array}{ll}
	c_{n} x & \text{ if }x \geq t_n, \\
	\mathbf{E}[V_{n+1}(\Omega_{n+1})|\Omega_{n}=x] & \text{ if }x < t_n.
\end{array}\right.
\label{eq:valuefunction_th}
\end{align}

\end{theorem}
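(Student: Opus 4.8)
The plan is a backward induction on the stage index that first extracts structural properties of the value functions $V_n$ and then reads the threshold rule off those properties together with the general finite-horizon optimal stopping theory of \cite[chap.~2]{uclastopping}. The first step is to record the transition dynamics: since $\Omega_{n+1}=\max(\Omega_n,\omega_{n+1})$ with $\omega_{n+1}$ independent of $\Omega_1,\dots,\Omega_n$, the continuation value at stage $n$ is the deterministic function $h_n(x):=\mathbf{E}[V_{n+1}(\Omega_{n+1})\mid\Omega_n=x]=\mathbf{E}[V_{n+1}(\max(x,\omega_{n+1}))]$, so \eqref{eq:dynamicprogram} becomes $V_n(x)=\max\{c_nx,\,h_n(x)\}$. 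The key lemma, proved by backward induction from the base case $V_N(x)=c_Nx$, is that for every $n$ the function $V_n$ is continuous, non-decreasing, and satisfies the one-sided Lipschitz bound $0\le V_n(y)-V_n(x)\le c_n(y-x)$ for $y\ge x$ (equivalently, $V_n(x)-c_nx$ is non-increasing). The inductive step combines: $\max(y,\omega)-\max(x,\omega)\le y-x$ for $y\ge x$; integrability of $\omega_{n+1}$ (for continuity of $h_n$ by dominated convergence, and for finiteness); the elementary inequality $\max(a',b')-\max(a,b)\le\max(a'-a,\,b'-b)$; and, crucially, the strict monotonicity $c_{n+1}<c_n$ of $c_n=1/(1+n\tau)$.

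Next I would study $g_n(x):=c_nx-h_n(x)$ for $n\le N-1$. From the lemma, $h_n$ inherits the bound $0\le h_n(y)-h_n(x)\le c_{n+1}(y-x)$ for $y\ge x$, so $g_n$ is continuous and \emph{strictly} increasing, with $g_n(y)-g_n(x)\ge(c_n-c_{n+1})(y-x)>0$; moreover $g_n(0)=-\mathbf{E}[V_{n+1}(\omega_{n+1})]<0$ (because $V_{n+1}(\omega_{n+1})\ge c_{n+1}\omega_{n+1}>0$ almost surely), while $g_n(x)\to+\infty$ as $x\to\infty$. Hence $g_n$ has a unique zero $t_n\in(0,\infty)$, and $g_n(t_n)=0$ is precisely the fixed-point equation \eqref{eq:threshold}. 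Since $c_nx\ge h_n(x)$ iff $g_n(x)\ge0$ iff $x\ge t_n$, equation \eqref{eq:dynamicprogram} collapses to the piecewise form \eqref{eq:valuefunction_th}: $V_n$ equals the immediate reward $c_nx$ above the threshold and the continuation value $h_n(x)$ below it. Finally, the general optimal stopping theory over a finite horizon guarantees that the backward-induction functions $V_n$ are the true optimal values and that stopping at the first stage where the immediate reward $c_n\Omega_n$ attains $V_n(\Omega_n)$ — equivalently, where $\Omega_n$ crosses the threshold $t_n$ determined by \eqref{eq:threshold} — is optimal, giving \eqref{eq:optimalstopping}.

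I expect the main obstacle to be propagating the correct monotonicity constant through the induction: one must carry the hypothesis ``$V_n(x)-c_nx$ is non-increasing'' rather than a plain Lipschitz bound with a stage-independent constant, and it is exactly the strict ordering $c_{n+1}<c_n$ that both closes the induction and forces $g_n$ to be strictly increasing, so that the threshold is unique and the stopping region is a genuine up-set. A minor technical point to treat carefully is that $V_n$ and $h_n$ are only piecewise differentiable, so the argument should be run through monotonicity of difference quotients rather than derivatives; and one should note separately that at stage $N$ one always stops, so that $N_s^*\le N$ and the rule is well defined.
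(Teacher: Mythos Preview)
Your proposal is correct and reaches the same conclusion as the paper, but by a genuinely different route. The paper proceeds by establishing \emph{convexity}: it shows by backward induction that each $V_n$ (and hence each continuation value $h_n(x)=\mathbf{E}[V_{n+1}(\Omega_{n+1})\mid\Omega_n=x]$) is convex and increasing, so the convex curve $h_n$ and the line $c_n x$ can meet in at most two points; it then rules out the degenerate cases by proving the global bound $V_n(x)\le c_n\,\mathbf{E}[\Omega_N\mid\Omega_n=x]$ and invoking dominated convergence to show that $c_n x$ eventually dominates $h_n(x)$, while $h_n(0)>0=c_n\cdot 0$ handles the origin. Your argument instead propagates a \emph{Lipschitz constant}: from the inductive hypothesis that $V_{n+1}$ is $c_{n+1}$-Lipschitz you obtain that $h_n$ is $c_{n+1}$-Lipschitz, whence $g_n(x)=c_n x-h_n(x)$ is strictly increasing with slope at least $c_n-c_{n+1}>0$, and a single intermediate-value step gives the unique threshold.

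What each approach buys: the paper's convexity argument yields, as a by-product, convexity of all the $V_n$ (potentially useful elsewhere), but it needs an extra asymptotic lemma and a dominated-convergence step to pin down exactly one intersection. Your Lipschitz argument is more elementary and self-contained: strict monotonicity of $g_n$ gives existence and uniqueness in one stroke, without appealing to composition rules for convex functions or to limit theorems, and it makes transparent exactly where the strict ordering $c_{n+1}<c_n$ enters. Both proofs rely on that strict ordering, and both use the same boundary observation $h_n(0)=\mathbf{E}[V_{n+1}(\omega_{n+1})]>0$; the integrability of $\omega_{n+1}$ that you flag for dominated convergence in your continuity step is the same ingredient the paper uses for its asymptotic limit.
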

\begin{proof}
The proof can be found in Appendix-\ref{pf:stoppingrule}.
\end{proof}

Theorem \ref{thm:stoppingrule} states that the optimal stopping rule $N_s^*$ to problem (\ref{stoppingproblem}) we formulated in Section \ref{sec:problem} is described by a set of thresholds $t_1,t_2,\dots, t_{N-1}$, based on which we propose the relay selection strategy $\text{RS\_OSR} =\{d_1^*,d_2^*,\dots,d_N^*\}$ as follows.
The decision $d_n^*$ at each stage $n\leq N-1$ is given by
\begin{align}
d_n^* = \left\{
\begin{array}{ll}
\text{Stop and choose }R_k & \text{ if }c_n\Omega_n \geq t_n, \omega_k = \Omega_n, \\
\text{Continue } & \text{ if }c_n\Omega_n < t_n,
\end{array}\right.
\label{eq:strategyn}
\end{align}
and at the final stage $N$
\begin{align}
d_N^* = \left\{
\begin{array}{ll}
\text{Choose }R_k & \text{ if }\Omega_N \geq \omega_{s,d}, \omega_k = \Omega_n, \\
\text{Do not choose any relay } & \text{ if }\Omega_N < \omega_{s,d}.
\end{array}\right.
\label{eq:strategyfinal}
\end{align}
%%%%%%%%%%%%%%%%%%%%%%%%%%%%%%%%%%%%%%%%%%%%%%%%%%%%%%%%%%%%%%%%%%%%%%%%%%%%%%%%

\section{Performance Analysis}
\label{sec:performance}
In this section, we analyze the performance of the relay selection strategy RS\_OSR defined in Section \ref{sec:solution}.
We show in Section \ref{sub:errorprob} that RS\_OSR achieves full diversity order.
In Section \ref{sub:bandwidth} we obtain an upper bound on the expected stopping time for RS\_OSR.
\subsection{Diversity Gain}
\label{sub:errorprob}
To compute the diversity gain of the relay selection strategy RS\_OSR, we
consider the error probability and show that RS\_OSR achieves full diversity order.
This result is stated in the theorem below.
\begin{theorem}
\label{thm:fullorder}
The relay selection strategy RS\_OSR defined by (\ref{eq:strategyn})-(\ref{eq:strategyfinal}) achieves full diversity order.
That is, when $N$ relays are available
\begin{align}
\text{Diversity gain}
:= &- \lim_{\gamma \rightarrow \infty} \frac{\log(\mathbf{P}_e(\gamma))}{\log(\gamma)} = (N+1),
\end{align}
where $\mathbf{P}(\gamma)$ is the error probability under RS\_OSR with $\gamma$ being the SNR.
% defined by (\ref{eq:SNR}).
\end{theorem}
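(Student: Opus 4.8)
The plan is to bound the error probability by conditioning on the stopping stage $N_s^*$ and to show that the only term that decays as slowly as $\gamma^{-(N+1)}$ is the one in which all $N$ relays have been probed, in which case RS\_OSR coincides with the genie‑aided best‑relay scheme of \cite{liu2009cooperative}. Concretely I would write
\begin{align}
\mathbf{P}_e(\gamma) = \sum_{n=1}^{N} \mathbf{P}\big(\text{error},\, N_s^* = n\big)
\end{align}
and estimate the terms separately. The converse inequality $\text{Diversity gain}\le N+1$ is immediate: RS\_OSR can never outperform the strategy that is handed all $\omega_{s,n},\omega_{n,d}$ at no probing cost and selects $\arg\max_n\omega_n$, and this full‑information DF scheme has at most $N+1$ independent diversity branches (the direct link together with the $N$ relay paths), so $\mathbf{P}_e(\gamma)=\Omega(\gamma^{-(N+1)})$, the standard result for selection DF relaying.

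For achievability I treat the early‑stopping terms $n\le N-1$ and the full‑probing term $n=N$ differently. On the event $\{N_s^* = n\}$ with $n<N$ one has $c_n\Omega_n\ge t_n$, i.e. $\Omega_n\ge t_n/c_n=:\theta_n$, a strictly positive constant independent of $\gamma$. If $R_k$ is the selected relay then $\omega_k=\Omega_n\ge\theta_n$; since the harmonic‑mean form \eqref{eq:omega_n} yields $\omega_k\le 2q_2\,\omega_{s,k}$ and $\omega_k\le 2q_1\,\omega_{k,d}$, both $\omega_{s,k}$ and $\omega_{k,d}$ are then bounded below by positive constants. Hence the source--relay link decodes correctly and the combined SNR at the destination is at least a fixed multiple of $\gamma$, so the conditional error probability decays faster than any polynomial in $\gamma$; each such term is $o(\gamma^{-(N+1)})$ and can be dropped.

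It remains to estimate $\mathbf{P}(\text{error},\,N_s^*=N)$. When stage $N$ is reached, all gains are revealed and by \eqref{eq:strategyfinal} the destination has available the direct link and the best relay path $\Omega_N=\max_{1\le k\le N}\omega_k$, so a decoding error forces both $\omega_{s,d}$ and $\Omega_N$ below a threshold of order $1/\gamma$. Using independence of $\omega_{s,d},\omega_1,\dots,\omega_N$, the small‑argument bounds $\mathbf{P}(\omega_{s,d}<c/\gamma)=O(\gamma^{-1})$ and $\mathbf{P}(\omega_n<c/\gamma)=O(\gamma^{-1})$ (the latter because $\omega_n\ge \mathrm{const}\cdot\min\{\omega_{s,n},\omega_{n,d}\}$), and $\mathbf{P}(\Omega_N<c/\gamma)=\prod_n\mathbf{P}(\omega_n<c/\gamma)=O(\gamma^{-N})$, one gets $\mathbf{P}(\text{error},\,N_s^*=N)=O(\gamma^{-(N+1)})$. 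Summing the contributions gives $\mathbf{P}_e(\gamma)=O(\gamma^{-(N+1)})$, hence $\text{Diversity gain}\ge N+1$, which together with the converse proves the theorem.

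I expect the main obstacle to be the coupling between the random stopping time and the channel realizations: one must show that the deep‑fade events dominating the error probability are incompatible with early stopping, which is precisely what the threshold form $\Omega_n\ge t_n/c_n$ together with the monotonicity bounds on $\omega_n$ in \eqref{eq:omega_n} provides. A secondary technical point is making the DF error analysis at stage $N$ rigorous — separating the events where the relay does and does not forward and handling the combining at $D$ — so as to justify the $O(\gamma^{-(N+1)})$ estimate; this part is essentially the computation already carried out in \cite{liu2009cooperative} for the full‑probing scheme and can be invoked.
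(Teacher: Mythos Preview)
Your proposal is correct and follows essentially the same route as the paper: decompose $\mathbf{P}_e(\gamma)$ by the stopping stage, use the threshold condition $\Omega_n\ge t_n/c_n$ together with the harmonic-mean bounds from \eqref{eq:omega_n} to argue that the early-stopping terms decay exponentially in $\gamma$ (the paper makes this precise via an explicit bound $Q_n(\gamma,t_n/c_n)$ built from $\Phi(\gamma)\le A_M e^{-B_M\gamma}$), and invoke \cite{liu2009cooperative} for the $n=N$ term to obtain the $(CGc_N\gamma)^{-(N+1)}$ rate. The paper's own proof in fact only establishes the achievability direction $\ge N+1$, so your explicit converse remark is an addition rather than a deviation.
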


\begin{proof}
In order to compute the error probability of our relay selection strategy, we first consider the error probability when the relay selection stops at stage $n$ and $R_k, ~k \leq n$ is selected.
Let $\mathbf{P}_{e,n}(\gamma,x)$ be the error probability conditional on $\Omega_n=x$ and the relay selection stops at stage $n$. From Appendix \ref{pf:fullorder} there exists constants $A_M, B_M$ such that
\begin{align}
 &\mathbf{P}_{e,n}(\gamma,x) \leq A_M\exp(-B_M\frac{1-r}{2q_1}xc_n\gamma) \nonumber \\
 &~~~+ A_M\exp(-B_M\frac{r}{2q_2} xc_n\gamma) := Q_n(\gamma,x)~.
\label{eq:upperboundn}
\end{align}
Now using (\ref{eq:upperboundn}), we try to get an upper bound for the error probability when the relay selection stop at stage $n$.
Under RS\_OSR, (\ref{eq:strategyn}) implies that
the relay selection stop at stage $n, ~n\leq N-1$ if $\Omega_n \geq t_n/c_n$ and $R_k,~k\leq n$ is selected when $\omega_k = \Omega_n$.
Consequently, for $n \leq N-1$ we have
\begin{align}
&\mathbf{P}(\text{error, relay selection stops at time }n) \nonumber\\
= & \mathbf{P}(\text{error}, \Omega_n \geq t_n/c_n) \nonumber\\
= & \int_{t_n/c_n}^{\infty}\mathbf{P}_{e,n}(\gamma,x)dF_{\Omega_n}(x) \nonumber\\
\leq & \int_{t_n/c_n}^{\infty}Q_n(\gamma,x)dF_{\Omega_n}(x) \nonumber\\
\leq & Q_n(\gamma,t_n/c_n).
\label{eq:upperbound_n}
\end{align}
By (\ref{eq:strategyfinal}), RS\_OSR selects relay $R_k$ if $\omega_k = \Omega_N \geq \omega_{s,d}$
and selects no relay if $\Omega_N < \omega_{s,d}$ at stage $N$. From \cite[chap. 8]{liu2009cooperative}, we know that the error probability if the selection stops at stage $N$ is upper bounded by
\begin{align}
&\mathbf{P}(\text{error, relay selection stops at time }N) \nonumber\\
\leq & (CG c_N\gamma)^{-(N+1)}\mathbf{P}(\text{relay selection stops at time }N) \nonumber\\
\leq & (CG c_N\gamma)^{-(N+1)},
\label{eq:upperbound_N}
\end{align}
where $CG$ is a constant defined in \cite[chap. 8]{liu2009cooperative}.

The combination of (\ref{eq:upperbound_n}) and (\ref{eq:upperbound_N}) gives the complete error probability as follows.
\begin{align}
\mathbf{P_e}(\gamma)
=& \sum_{n=1}^N\mathbf{P}(\text{error, relay selection stops at time }n) \nonumber\\
\leq & \sum_{n=1}^{N-1}Q_n(\gamma,t_n/c_n)+(CG c_N\gamma)^{-(N+1)} \nonumber\\
= &(1+o(\gamma)) (CG c_N\gamma)^{-(N+1)}~,
\label{eq:errorrate}
\end{align}
where the last equality in (\ref{eq:errorrate}) is true because $Q_n(\gamma,t_n/c_n)$ is the sum of two exponential functions that are dominated by $(CG c_N\gamma)^{-(N+1)})$ when $\gamma$ is large.
From the upper bound (\ref{eq:errorrate}) the diversity gain can be computed by
\begin{align}
\text{Diversity gain}= &- \lim_{\gamma \rightarrow \infty} \frac{\log(\mathbf{P_e}(\gamma))}{\log(\gamma)} \nonumber\\
\geq &- \lim_{\gamma \rightarrow \infty} \frac{\log((1+o(\gamma)) (CG c_N\gamma)^{-(N+1)})}{\log(\gamma)} \nonumber\\
= & (N+1).
\label{eq:diversitygain}
\end{align}
\end{proof}
Theorem \ref{thm:fullorder} states that the relay selection strategy RS\_OSR achieves the full diversity order of $N+1$ with $N$ relays.

\subsection{Bandwidth Efficiency}
\label{sub:bandwidth}

Let $N^*_{s}$ be the optimal stopping time in RS\_OSR (defined by (\ref{eq:optimalstopping})).
The expectation of $N^*_s$ can be computed as follows
\begin{align}
	\mathbf{E}[N_s^*] =& 1 + \sum_{n = 1}^N P( N_s^* > n) = 1+\sum_{n = 1}^N \mathbf{P}(\cap_{k=1}^{n} \{ \Omega_k < t_{k}\}) \nonumber\\
			  \leq & 1+\sum_{n = 1}^N \mathbf{P}( \Omega_n < t_n) = 1+\sum_{n = 1}^N (\mathbf{P}(\omega_1 < t_n))^n.
			  \label{eq:NsUB1}
\end{align}
Let $t_{N,max} = \max_{n=1,2,\dots,N}(t_n)$ and $p_N=\mathbf{P}(\omega_1 < t_{N,max})$, then from (\ref{eq:NsUB1}) we further obtain
\begin{align}
	\mathbf{E}[N_s^*] \leq & 1+ \sum_{n = 1}^N (\mathbf{P}(\omega_1 < t_n))^n \nonumber\\
				    \leq & \sum_{n = 0}^N p_N^n = \frac{p^{N+1}_N}{1-p_N}.
			  		\label{eq:ENUB}
\end{align}
If $1-p_N \geq \epsilon$ for some $\epsilon >0$, the expected stopping time of RS\_OSR is bounded for any number of relays.
We show in the theorem below that it is indeed the case.
\begin{theorem}
\label{thm:bandwidth}
There exists some positive constant $\epsilon > 0$ such that $1-p_N \geq \epsilon$ for any $N$.
Consequently, the expected number of probing for the relay selection strategy RS\_OSR described by (\ref{eq:strategyn})-(\ref{eq:strategyfinal}) is bounded above by
\begin{align}
	\mathbf{E}[N_s^*] \leq & \frac{1-\epsilon}{\epsilon}~.
\end{align}

 %we can conclude that the expected stopping time $N_s^*$ is always bounded.
\end{theorem}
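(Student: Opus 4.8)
The plan is to reduce the whole statement to one quantitative fact: the stopping thresholds $t_1,\dots,t_{N-1}$ admit a common bound $T<\infty$ depending only on $\tau,q_1,q_2$ and \emph{not} on $N$. Granting this, the theorem follows at once. Since the index $\omega_1$ in \eqref{eq:omega_n} dominates a fixed positive multiple of $\min(\omega_{s,1},\omega_{1,d})$, whose support is unbounded, $\epsilon:=\mathbf{P}(\omega_1\ge T)>0$; hence $p_N=\mathbf{P}(\omega_1<t_{N,\max})\le\mathbf{P}(\omega_1<T)=1-\epsilon$, so $1-p_N\ge\epsilon$, and feeding this into the chain \eqref{eq:NsUB1}--\eqref{eq:ENUB} already established bounds $\mathbf{E}[N_s^*]$ by $\sum_{n\ge0}(1-\epsilon)^n=1/\epsilon$, a constant independent of $N$.

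The work is therefore the uniform threshold bound. First I would record the light tail of the index: from \eqref{eq:omega_n}, $\omega_n=\bigl(\tfrac{1}{2q_2\omega_{s,n}}+\tfrac{1}{2q_1\omega_{n,d}}\bigr)^{-1}\le 2q_2\,\omega_{s,n}$, and $\omega_{s,n}=|h_{s,n}|^2$ is a unit-mean exponential, so $\mathbf{P}(\omega_n>t)\le e^{-\lambda t}$ with $\lambda:=1/(2\min(q_1,q_2))$; in particular $\mathbf{E}[\omega_1]\le 1/\lambda<\infty$. Next I would prove the value-function estimate $V_n(x)\le c_n x+K$, with $K:=\mathbf{E}\bigl[\sup_{j\ge1}c_j\omega_j\bigr]$, by a prophet-type bound: for any stopping time $N_s\ge n$, $c_{N_s}\Omega_{N_s}=c_{N_s}\max(\Omega_n,\omega_{n+1},\dots,\omega_{N_s})\le c_n\Omega_n+\sup_{j>n}c_j\omega_j$, using $c_{N_s}\le c_n$ and $c_{N_s}\le c_j$ for $n<j\le N_s$; taking expectations conditional on $\Omega_n=x$ (the future channels being independent of $\Omega_n$) and then the supremum over $N_s$ gives the bound. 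Finiteness of $K$ is a short computation: $\mathbf{P}(\sup_j c_j\omega_j>u)\le\sum_{j\ge1} e^{-\lambda u(1+j\tau)}\le(e^{\lambda\tau u}-1)^{-1}$, and $\int_0^\infty\min\{1,(e^{\lambda\tau u}-1)^{-1}\}\,du\le(1+\ln 2)/(\lambda\tau)$.

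Plugging into the fixed-point equation \eqref{eq:threshold}: at $\Omega_n=t_n$ the stop and continue values agree, so $c_n t_n=\mathbf{E}[V_{n+1}(\Omega_{n+1})\mid\Omega_n=t_n]\le c_{n+1}\mathbf{E}[\max(t_n,\omega_1)]+K=c_{n+1}t_n+c_{n+1}\mathbf{E}[(\omega_1-t_n)^+]+K$, hence $(c_n-c_{n+1})\,t_n\le c_{n+1}\mathbf{E}[(\omega_1-t_n)^+]+K\le\mathbf{E}[\omega_1]+K$. This already gives that $t_n$ is finite for each fixed $n$, with $t_n=O\bigl(1/(c_n-c_{n+1})\bigr)$.

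Making this bound \emph{uniform} in $n$ is the main obstacle, and I expect it to absorb essentially all of the effort. Since $c_n-c_{n+1}=\frac{\tau}{(1+n\tau)(1+(n+1)\tau)}$ is not bounded below, the crude rearrangement above only yields $t_n=O(n^2)$. The remedy is to exploit that for large $t_n$ the continuation value is exponentially close to $c_{n+1}t_n$ rather than $c_{n+1}t_n+K$: the option value $\delta_{n+1}(y):=V_{n+1}(y)-c_{n+1}y$ vanishes for $y\ge t_{n+1}$, and both $\mathbf{E}[(\omega_1-t_n)^+]$ and $\mathbf{E}[\delta_{n+1}(\max(t_n,\omega_1))]$ are $O(e^{-\lambda t_n})$; carrying those factors through \eqref{eq:threshold} sharpens the relation to roughly $\lambda\tau\,t_n e^{\lambda t_n}\lesssim(1+n\tau)^2$. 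Turning this $n$-dependent estimate into a genuinely $n$-free constant $T$ — by a careful analysis of the $t_n$ recursion balancing the vanishing gap $c_n-c_{n+1}$ against the exponential smallness on the right, over the whole range $n\le N$ — is the delicate step on which the proof hinges; the light-tail and value-function facts above are the routine ingredients.
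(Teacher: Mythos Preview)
Your route diverges from the paper's almost immediately. Instead of the prophet-type bound $V_n(x)\le c_nx+K$ and the tail analysis you sketch, the paper works directly with the one-step function $h(x):=\mathbf{E}[\max(x,\omega_1)]$. From $h'(x)=F_{\omega_1}(x)$ and $h(x)\ge x$ one checks that $h(x)/x$ is strictly decreasing; call its inverse $g$. At stage $N-1$ the fixed-point equation \eqref{eq:threshold} reads $c_{N-1}t_{N-1}=c_N\,h(t_{N-1})$, so $t_{N-1}=g(c_{N-1}/c_N)$. The paper then proves by induction that $t_n\le t_{N-1}$ for \emph{every} $n\le N-1$: if $t_n\le t_{n+1}$ this follows from the induction hypothesis; if $t_n>t_{n+1}$, then conditional on $\Omega_n=t_n$ one has $\Omega_{n+1}=\max(t_n,\omega_{n+1})\ge t_{n+1}$ almost surely, so $V_{n+1}(\Omega_{n+1})=c_{n+1}\Omega_{n+1}$ by \eqref{eq:valuefunction_th}, and \eqref{eq:threshold} collapses to $c_nt_n=c_{n+1}h(t_n)$, i.e.\ $t_n=g(c_n/c_{n+1})$. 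Since $c_n/c_{n+1}=1+\tau/(1+n\tau)$ is decreasing in $n$ and $g$ is decreasing, $t_n\le g(c_{N-1}/c_N)=t_{N-1}$. Thus $t_{N,\max}=t_{N-1}$, and the paper sets $t^*:=g(1)$ and takes $\epsilon=1-\mathbf{P}(\omega_1\le t^*)$.

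What this buys over your approach is that the ``main obstacle'' you flag---making the threshold bound uniform in $n$---evaporates: the induction pins every $t_n$ below the single number $t_{N-1}$ with no balancing of $c_n-c_{n+1}$ against exponential tails, no constant $K$, and no prophet inequality. Your proposal, as written, has a genuine gap exactly there: you concede that the crude rearrangement gives only $t_n=O(n^2)$ and that the exponential refinement is ``the delicate step on which the proof hinges'', but you never carry it out. The paper's monotonicity-of-$h(x)/x$ plus two-case induction is the missing structural idea that replaces all of your tail machinery.

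One caveat worth noting when you study the paper's argument: its final step asserts $t_{N-1}<g(1)=:t^*$ with $\mathbf{P}(\omega_1<t^*)<1$. But $h(x)/x=1+\mathbf{E}[(\omega_1-x)^+]/x>1$ for every finite $x$ when $\omega_1$ has unbounded support, so $g(1)$ is not attained at any finite point; correspondingly $t_{N-1}=g(c_{N-1}/c_N)$ drifts upward as $N$ grows. Your instinct that uniformity in $N$ is the crux is therefore sound, and the paper's resolution of that point deserves the same scrutiny you were applying to your own.
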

\begin{proof}
Since $V_N(x) = c_Nx$, from (\ref{eq:threshold}) we know that $t_{N-1}$ satisfies
\begin{align}
c_{N-1}t_{N-1} = &\mathbf{E}[V_{N}(\Omega_{N})|\Omega_{N-1} = t_{N-1}] \nonumber\\
               = &\mathbf{E}[c_{N}\max(t_{N-1},\omega_{N})].
\end{align}
Let $h(x) = \mathbf{E}[\max(x,\omega_{1})]$, we get
\begin{align}
\frac{h(t_{N-1})}{t_{N-1}} = \frac{c_{N-1}}{c_N} =\frac{1+N\tau}{1+(N-1)\tau}> 1.
               \label{eq:tN-1}
\end{align}
Note that the function $\frac{h(x)}{x}$ is strictly decreasing as shown below.
We first compute the derivative of $h(x)$ as follows.
\begin{align}
h'(x) =
&\left(\int_{0}^x x dF_{\omega_1}(\omega)+ \int_{x}^{\infty} \omega dF_{\omega_1}(\omega) \right)' \nonumber\\
= & x F'_{\omega_1}(x) + F_{\omega_1}(x)
- xF'_{\omega_1}(x) \nonumber\\
= & F_{\omega_1}(x),
\end{align}
where $F_{\omega_1}(x)$ is the CDF of $\omega_1$.
Moreover, $h(x) = \mathbf{E}[\max(x,\omega_{1})] \geq x$. Then
\begin{align}
\left(\frac{h(x)}{x}\right)' = &\frac{1}{x^2}\left(xh'(x) - h(x)\right) \nonumber\\
= &\frac{1}{x^2}\left(xF_{\omega_1}(x) - h(x)\right) \nonumber\\
\leq &\frac{1}{x^2}\left(xF_{\omega_1}(x) - x\right) <0.
\end{align}
Define $g(x)$ as the inverse function of $\frac{h(x)}{x}$, then $g(x)$ is also strictly decreasing.
\\
Let $t^*$ be the solution to $\frac{h(x)}{x} = 1$, i.e. $t^* = g(1)$.
Then, it follows form (\ref{eq:tN-1}) that
\begin{align}
t_{N-1} = g\left(\frac{c_{N-1}}{c_N} \right) < g(1) = t^*.
\end{align}
Furthermore, we want to show that
\begin{align}
t_n \leq t_{N-1} \text{ for all }n\leq N-1.
\label{eq:tnleqtN}
\end{align}
The proof is done by induction. \\
(\ref{eq:tnleqtN}) is true for $N-1$. Assume the (\ref{eq:tnleqtN}) holds for $n+1$.\\
For stage $n$, if $t_n \leq t_{n+1}$, we get $t_n \leq t_{n+1} \leq t_{N-1}$ by the induction hypothesis.
If $t_n > t_{n+1}$, $\max(t_n,\omega) \geq t_{n+1}$ for any $\omega$.
From (\ref{eq:valuefunction_th}) we obtain
\begin{align}
\mathbf{E}[V_{n+1}(\Omega_{n+1})|\Omega_n = t_n]
 = &\mathbf{E}[V_{n+1}(\max(t_n,\omega_{n+1}))] \nonumber\\
 = &\mathbf{E}[c_{n+1}\max(t_n,\omega_{n+1})].
\end{align}
Then, from (\ref{eq:threshold}) for $t_n$ we get
\begin{align}
c_nt_n = &\mathbf{E}[V_{n+1}(\Omega_{n+1})|\Omega_n = t_n] \nonumber\\
	   = &\mathbf{E}[c_{n+1}\max(t_n,\omega_{n+1})]  \nonumber\\
	   = &c_{n+1}h(t_n).
\end{align}
Therefore,
\begin{align}
t_n = g\left(\frac{c_n}{c_{n+1}}\right).
\end{align}
Note that
\begin{align}
\frac{c_n}{c_{n+1}} = \frac{1+(n+1)\tau}{1+n\tau} \geq \frac{1+N\tau}{1+(N-1)\tau}=\frac{c_{N-1}}{c_N}.
\end{align}
Since $g(x)$ is decreasing, we have
\begin{align}
t_n = g\left(\frac{c_n}{c_{n+1}}\right) \leq g\left(\frac{c_{N-1}}{c_{N}}\right) = t_{N-1}.
\end{align}
As a result of the above analysis, we have
\begin{align}
t_n \leq t_{N-1} < t^*
\end{align}
for any $n = 1,2,\dots,N-1$. Moreover,
\begin{align}
p_N=\mathbf{P}(\omega_1 < t_{N,max}) \leq \mathbf{P}(\omega_1 < t^*) < 1.
\end{align}
We can now define $\epsilon = 1-\mathbf{P}(\omega_1 \leq t^*)>0$, then
$1-p_N \geq  1-\mathbf{P}(\omega_1 \leq t^*) =\epsilon$.
From (\ref{eq:ENUB}) we obtain
\begin{align}
	\mathbf{E}[N_s^*] \leq & \frac{p^{N+1}_N}{1-p_N} \leq \frac{1-\epsilon}{\epsilon}
\end{align}
for any total number $N$ of relays.

\end{proof}
From Theorem \ref{thm:bandwidth}, we know that the expected stopping time of the relay selection strategy RS\_OSR is bounded. Therefore, the expected time for each transmission is bounded by
\begin{align}
\mathbf{E}[T_{tran}(1+N_s^*\tau)] \leq T_{tran}(1+\tau\frac{1-\epsilon}{\epsilon}).
\end{align}
\section{Simulation results}
\label{sec:simulation}

We simulate a relay networks and present simulation results. To implement RS\_OSR, we solved the fixed point equation (\ref{eq:threshold}) in an offline fashion. In particular we use an iterative heuristic to solve the equation with the help of Monte Carlo sampling (to remedy the computation complexity from calculating the conditional expectation in (\ref{eq:threshold})).

In the simulation we compare our proposed relay selection strategy RS\_OSR with
the optimal relay selection (w.r.t. indices $\omega_n,n=1,2,\dots,N$ defined by (\ref{eq:omega_n})) that probes all relays, which is denoted by RS\_ALL. Note that the optimality (can achieve full diversity order) of RS\_ALL is proved in \cite{liu2009cooperative}.

%We start with comparing the error probability $\mathbf{P_e}$.
%In order to do a fair comparison, we normalize the saved bandwidth to power as follow
%\begin{align}
%	P_{\text{norm}} = P\cdot c_{\text{avg}}
%\end{align}
%where $c_{\text{avg}}$ is the average bandwidth efficiency $c_k$ of the relay selection protocol, so $c_{\text{avg}} = \frac{1}{1+N\tau}$ for optimal relay selection.

We start with comparing the error probability $\mathbf{P_e}$. Fig.\ref{fin1} and Fig.\ref{fin2} show the comparison for error probability with $\tau = 0.05$ and $\tau = 0.1$ respectively. From Fig.\ref{fin1} and Fig.\ref{fin2} we observe that RS\_OSR achieves full diversity order as comparable with RS\_ALL. Moreover our algorithm outperforms RS\_ALL at finer degree consistently. This dues to the gain of efficient bandwidth and power saving. We also observe that the advantages of RS\_OSR is more obvious when $\tau$ is higher. This is intuitively true : our optimal stopping selection strategy helps save sensing time and power and therefore the advantage becomes more and more clear when the sensing complexity becomes higher.

\begin{figure}[h]
    \centering
        \includegraphics[width=0.45\textwidth]{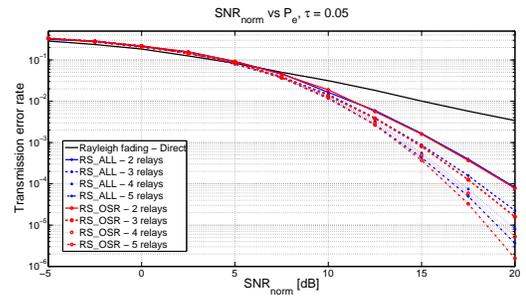}
    \caption{BER performance with $\tau = 0.05$}
\label{fin1}
\end{figure}
\begin{figure}[h]
    \centering
        \includegraphics[width=0.45\textwidth]{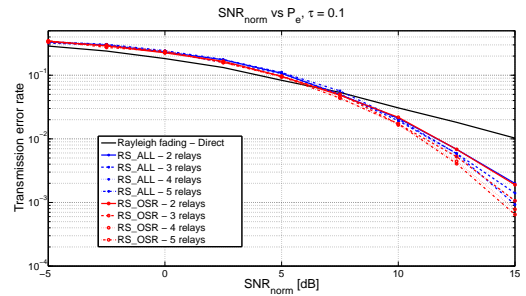}
    \caption{BER performance with $\tau = 0.1$}
\label{fin2}
\end{figure}

Bandwidth efficiency results (rate is measured by the average $c_n$ over all sample path) are shown in Fig.\ref{rate1} and Fig.\ref{rate2}, with $\tau = 0.05$ and $\tau = 0.1$ respectively. From Fig.\ref{rate1} and Fig.\ref{rate2} we conclude that under both cases, the probing time of RS\_OSR is bounded, while the probing time of RS\_ALL increases (thus the bandwidth efficiency decreases) as the number of relays increases.
\begin{figure}[h]
    \centering
        \includegraphics[width=0.45\textwidth]{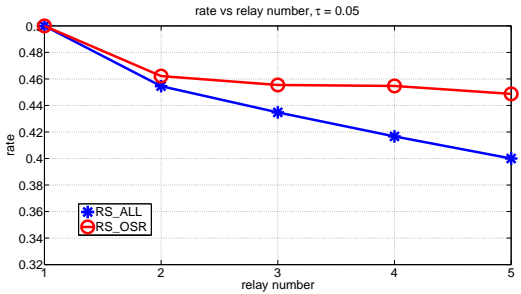}
    \caption{Bandwidth efficiency with $\tau = 0.05$}
\label{rate1}
\end{figure}
\begin{figure}[h]
    \centering
        \includegraphics[width=0.45\textwidth]{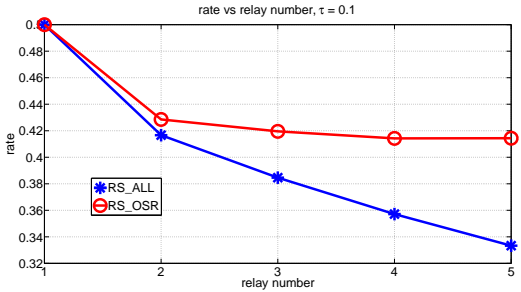}
    \caption{Bandwidth efficiency with $\tau = 0.1$}
\label{rate2}
\end{figure}

%From the above results, we conclude that our opportunistic relay selection algorithm do save systems' bandwidth. And with respect to bandwidth saving, stopping rule based selection algorithm outperforms the optimal stopping rule one due to its convergence of probing times. This is also the biggest advantage of the optimal stopping rule.

%In Fig.\ref{non1}, the BER performance of optimal stopping rule based relay selection has the similar performance with optimal relay selection while the sub-optimal stopping rule performs worse. But the performance of the sub-optimal stopping rule can increase along with the increase of relay numbers.

\section{Conclusion}
\label{sec:conclusion}
In the presence of non-negligible probing time for getting relay channels' instantaneous quality, obtaining full information for the purpose of relay selection leads to an inefficient use of bandwidth in cooperative communications.
We design and implement a stopping rule based relay selection strategy RS\_OSR and proved its optimality regarding achieving full diversity order.
Moreover, the probing time for sensing relay channels under RS\_OSR is shown to remain bounded regardless of the number of relay candidates. We establish and demonstrate the above two properties by both analytical and simulation results.

%%%%%%%%%%%%%%%%%%%%%%%%%%%%%%%%%%%%%%%%%%%%%%%%%%%%%%%%%%%%%%%%%%%%%%%%%%%%%%%%

%%%%%%%%%%%%%%%%%%%%%%%%%%%%%%%%%%%%%%%%%%%%%%%%%%%%%%%%%%%%%%%%%%%%%%%%%%%%%%%%

\begin{small}
\bibliographystyle{ieeetr}
\bibliography{ref}

\begin{thebibliography}{10}

\bibitem{zhang2007high}
W.~Zhang, X.-G. Xia, and P.-C. Ching, ``High-rate full-diversity
  space--time--frequency codes for broadband mimo block-fading channels,'' {\em
  IEEE Transactions on Communications}, vol.~55, no.~1, pp.~25--34, 2007.

\bibitem{heath2005switching}
R.~W. Heath and A.~J. Paulraj, ``Switching between diversity and multiplexing
  in mimo systems,'' {\em IEEE Transactions on Communications}, vol.~53, no.~6,
  pp.~962--968, 2005.

\bibitem{lehmann2007evaluation}
N.~H. Lehmann, E.~Fishler, A.~M. Haimovich, R.~S. Blum, D.~Chizhik, L.~J.
  Cimini, and R.~A. Valenzuela, ``Evaluation of transmit diversity in
  mimo-radar direction finding,'' {\em IEEE Transactions on Signal Processing},
  vol.~55, no.~5, pp.~2215--2225, 2007.

\bibitem{nosratinia2004cooperative}
A.~Nosratinia, T.~E. Hunter, and A.~Hedayat, ``Cooperative communication in
  wireless networks,'' {\em IEEE Communications Magazine}, vol.~42, no.~10,
  pp.~74--80, 2004.

\bibitem{laneman2004cooperative}
J.~N. Laneman, D.~N. Tse, and G.~W. Wornell, ``Cooperative diversity in
  wireless networks: Efficient protocols and outage behavior,'' {\em IEEE
  Transactions on Information Theory}, vol.~50, no.~12, pp.~3062--3080, 2004.

\bibitem{bletsas2006simple}
A.~Bletsas, A.~Khisti, D.~P. Reed, and A.~Lippman, ``A simple cooperative
  diversity method based on network path selection,'' {\em IEEE Journal on
  Selected Areas in Communications}, vol.~24, no.~3, pp.~659--672, 2006.

\bibitem{zhao2006improving}
Y.~Zhao, R.~Adve, and T.~J. Lim, ``Improving amplify-and-forward relay
  networks: optimal power allocation versus selection,'' in {\em 2006 IEEE
  International Symposium on Information Theory}, pp.~1234--1238, IEEE, 2006.

\bibitem{krikidis2008amplify}
I.~Krikidis, J.~Thompson, S.~McLaughlin, and N.~Goertz, ``Amplify-and-forward
  with partial relay selection,'' {\em IEEE Communications Letters}, vol.~12,
  no.~4, pp.~235--237, 2008.

\bibitem{ibrahim2008cooperative}
A.~S. Ibrahim, A.~K. Sadek, W.~Su, and K.~R. Liu, ``Cooperative communications
  with relay-selection: when to cooperate and whom to cooperate with?,'' {\em
  IEEE Transactions on Wireless Communications}, vol.~7, no.~7, pp.~2814--2827,
  2008.

\bibitem{vicario2009opportunistic}
J.~L. Vicario, A.~Bel, J.~A. Lopez-Salcedo, and G.~Seco, ``Opportunistic relay
  selection with outdated csi: outage probability and diversity analysis,''
  {\em IEEE Transactions on Wireless Communications}, vol.~8, no.~6,
  pp.~2872--2876, 2009.

\bibitem{liu2009cooperative}
W.~S. K.J.Ray~Liu, Ahmed~K.Sadek and A.~Kwasinski, {\em Cooperative
  communications and networking}.
\newblock Cambridge University Press, 2009.

\bibitem{jing2009single}
Y.~Jing and H.~Jafarkhani, ``Single and multiple relay selection schemes and
  their achievable diversity orders,'' {\em IEEE Transactions on Wireless
  Communications}, vol.~8, no.~3, pp.~1414--1423, 2009.

\bibitem{zou2010adaptive}
Y.~Zou, J.~Zhu, B.~Zheng, and Y.-D. Yao, ``An adaptive cooperation diversity
  scheme with best-relay selection in cognitive radio networks,'' {\em IEEE
  Transactions on Signal Processing}, vol.~58, no.~10, pp.~5438--5445, 2010.

\bibitem{hong2006energy}
Y.-W. Hong and A.~Scaglione, ``Energy-efficient broadcasting with cooperative
  transmissions in wireless sensor networks,'' {\em IEEE Transactions on
  Wireless Communications}, vol.~5, no.~10, pp.~2844--2855, 2006.

\bibitem{li:TWC11}
L.~Li, X.~Zhou, H.~Xu, G.~Li, D.~Wang, and A.~Soong, ``Simplified relay
  selection and power allocation in cooperative cognitive radio systems,'' {\em
  IEEE Transactions on Wireless Communications}, vol.~10, pp.~33--36, January
  2011.

\bibitem{jamal2011interference}
T.~Jamal, P.~Mendes, and A.~Z{\'u}quete, ``Interference-aware opportunistic
  relay selection,'' in {\em Proceedings of The ACM CoNEXT Student Workshop},
  p.~11, ACM, 2011.

\bibitem{adebo:milcom14}
P.~Adebo, E.~Adebola, and A.~Annamalai, ``Improving the ergodic capacity of
  wireless networks using opportunistic relays,'' in {\em Military
  Communications Conference (MILCOM), 2014 IEEE}, pp.~1608--1613, Oct 2014.

\bibitem{kanodia2004moar}
V.~Kanodia, A.~Sabharwal, and E.~Knightly, ``Moar: A multi-channel
  opportunistic auto-rate media access protocol for ad hoc networks,'' in {\em
  Proceedings. First International Conference on Broadband Networks, 2004.
  BroadNets 2004.}, pp.~600--610, IEEE, 2004.

\bibitem{liu2006sensing}
X.~Liu and N.~S. Shankar, ``Sensing-based opportunistic channel access,'' {\em
  Mobile Networks and Applications}, vol.~11, no.~4, pp.~577--591, 2006.

\bibitem{survey:osa}
Q.~Zhao and B.~Sadler, ``A survey of dynamic spectrum access,'' {\em IEEE
  Signal Processing Magazine}, vol.~24, pp.~79--89, May 2007.

\bibitem{zheng2009distributed}
D.~Zheng, W.~Ge, and J.~Zhang, ``Distributed opportunistic scheduling for ad
  hoc networks with random access: an optimal stopping approach,'' {\em
  Information Theory, IEEE Transactions on}, vol.~55, no.~1, pp.~205--222,
  2009.

\bibitem{liu2013stay}
Y.~Liu and M.~Liu, ``To stay or to switch: Multiuser dynamic channel access,''
  in {\em INFOCOM, 2013 Proceedings IEEE}, pp.~1249--1257, IEEE, 2013.

\bibitem{uclastopping}
T.~S. Ferguson, ``{Optimal Stopping and Applications},'' {\em Mathematics
  Department, UCLA}, 2006.

\bibitem{boyd2004convex}
S.~P. Boyd and L.~Vandenberghe, {\em Convex optimization}.
\newblock Cambridge university press, 2004.

\bibitem{proakis2001digital}
J.~Proakis, {\em Digital Communications}.
\newblock McGraw-Hill series in electrical and computer engineering :
  communications and signal processing, McGraw-Hill, 2001.

\end{thebibliography}
\end{small}
\begin{appendices}

\section{Proof of Theorem \ref{thm:stoppingrule}}\label{pf:stoppingrule}

\begin{proof}
First, we want to prove that the value function $V_n(x)$ and $E[V_{n+1}(\Omega_{n+1})|\Omega_{n}=x] $ are convex and increasing. The proof is done by induction.

At the final stage $N$, the value function $V_N(x) = c_N x$ is obviously convex (linear) and increasing.
Assume that $V_{n+1}(x)$ is convex and increasing.
Then, at stage $n$ we have
\begin{align}
\mathbf{E}[V_{n+1}(\Omega_{n+1})|\Omega_{n}=x]
= &\mathbf{E}[V_{n+1}(\max(x,\omega_{n+1}))|\Omega_{n}=x] \nonumber\\
= &\mathbf{E}[V_{n+1}(\max(x,\omega_{n+1}))],
\end{align}
where the last equality holds because $\omega_{n+1}$ is independent of $\Omega_{n}$ (due to the assumption of independence among channels).
Since $V_{n+1}(x)$ is convex and increasing and the function $\max(x,\omega_{n+1})$ is convex and increasing in $x$,
by the property of convex functions (see \cite{boyd2004convex}) we know that their composition
$V_{n+1}(\max(x,\omega_{n+1}))$ is convex and increasing in $x$ for every $\omega_{n+1}$.
Therefore, $\mathbf{E}[V_{n+1}(\Omega_{n+1})|\Omega_{n}=x]$ is also convex and increasing, as it is the
the expectation of $V_{n+1}(\max(x,\omega_{n+1}))$ over $\omega_{n+1}$.
Since $V_{n}(x) =\max\left\{c_{n} x, \mathbf{E}[V_{n+1}(\Omega_{n+1})|\Omega_{n}=x] \right\}$ by (\ref{eq:dynamicprogram}),
$V_{n}(x)$ is also convex and increasing.
%From the above analysis we know that $\mathbf{E}[V_{n+1}(\Omega_{n+1})|\Omega_{n}=x]$ is convex and increasing.
%Moreover, the linear function $c_{n} x$ is also convex and increasing.
%As a result, the value function $V_{n}(x)$, as the maximum of two convex and increasing functions, is convex and increasing.

With the convexity established above, we know that for any $n=1,2,\dots N-1$, $\mathbf{E}[V_{n+1}(\Omega_{n+1})|\Omega_{n}=x]$ and $c_{n} x$ can have at most two intersections because $E[V_{n+1}(\Omega_{n+1})|\Omega_{n}=x]$ is convex and $c_{n} x$ is linear.
We show below that they have exactly one intersection.
For that matter, we want to prove by induction
\begin{align}
 V_n(x)\leq c_n \mathbf{E}[\Omega_N|\Omega_{n}=x] .
\label{eq:value_UB}
\end{align}
Inequality (\ref{eq:value_UB}) is easily true at stage $N$. Suppose (\ref{eq:value_UB}) is true for $n+1$.
At stage $n$, since $c_{n} > c_{n+1}$ we obtain
\begin{align}
V_{n}(x)
=&
\max\left\{c_{n} x, \mathbf{E}[V_{n+1}(\Omega_{n+1})|\Omega_{n}=x] \right\} \nonumber\\
\leq &\max\left\{c_{n} x, \mathbf{E}[c_{n+1}\mathbf{E}[\Omega_{N}|\Omega_{(n+1)}]|\Omega_{n}=x] \right\} \nonumber\\
= &\max\left\{c_{n} x, c_{n+1}\mathbf{E}[\Omega_N|\Omega_{n}=x] \right\} \nonumber\\
\leq &\max\left\{c_{n} x, c_{n}\mathbf{E}[\Omega_N|\Omega_{n}=x]  \right\} \nonumber\\
= & c_n \mathbf{E}[\Omega_N|\Omega_{n}=x].
\end{align}
Then inequality (\ref{eq:value_UB}) is true at any stage.
Consequently, we get, as $x\rightarrow\infty $,
\begin{align}
\frac{V_n(x)}{c_{n} x }
\leq& \frac{c_n \mathbf{E}[\Omega_N|\Omega_{n}=x] }{c_{n} x } \nonumber\\
= &\mathbf{E}[\max(1,\frac{\max_{k>n}\omega_{k}}{x} )]\nonumber\\
\leq &\mathbf{E}[1+\frac{\max_{k>n}\omega_{k}}{x}]  \rightarrow 1,
 \label{eq:limub}
\end{align}
where the convergence in (\ref{eq:limub}) is true because of the following.
Note that $\frac{\max_{k>n}\omega_{k}}{x}$ converges point-wise to $0$, and
$\frac{\max_{k>n}\omega_{k}}{x} \leq \max_{k>n}\omega_{k}$ for $x \geq 1$.
Since $\max_{k>n}\omega_{k}$ is in $L_1$, by dominated convergence theorem $\frac{\max_{k>n}\omega_{k}}{x}$ converges to $0$ in expectation.

When $x =0$ we have
\begin{align}
c_{n} x = 0 \leq \mathbf{E}[V_{n+1}(\omega_{n+1})]=\mathbf{E}[V_{n+1}(\Omega_{n+1})|\Omega_{n}=0].
 \label{eq:0ub}
\end{align}
From (\ref{eq:limub}) and (\ref{eq:0ub}), the linear function $c_{n} x $ is above $\mathbf{E}[V_{n+1}(\Omega_{n+1})|\Omega_{n}=x]$ for large $x$, and
$c_{n} x $ is below $\mathbf{E}[V_{n+1}(\Omega_{n+1})|\Omega_{n}=x]$ when $x=0$.
Therefore, the two functions
$c_{n} x $ and $\mathbf{E}[V_{n+1}(\Omega_{n+1})|\Omega_{n}=x]$ have exactly one intersection.

Let $t_n$ denote the intersection of $c_{n} x $ and $\mathbf{E}[V_{n+1}(\Omega_{n+1})|\Omega_{n}=x]$, we get
\begin{align}
V_{n}(x) = \left\{ \begin{array}{ll}
	c_{n} x & \text{ if }x \geq t_n, \\
	\mathbf{E}[V_{n+1}(\Omega_{n+1})|\Omega_{n}=x] & \text{ if }x < t_n.
\end{array}\right.
\label{eq:valuefunction_thapp}
\end{align}
As a result of (\ref{eq:valuefunction_thapp}), the optimal stopping time is described by (\ref{eq:optimalstopping}).
\end{proof}

\section{Proof of \eqref{eq:upperboundn} in Theorem \ref{thm:fullorder}}\label{pf:fullorder}

Let $\Phi(\gamma)$ be the symbol error rate function with SNR equals $\gamma$ for each node.
Since we apply perfect DF protocol, there are two cases to consider with.
In the first case the relay perfectly reconstructs the signal and the destination receives both signals from the source and the relay.
In the second case the relay fails to reconstruct the signal, so only the source transmitted signal is received at the destination.
The probability for the first case is $1-\Phi(\gamma_{k})$ and the one for the second case is $\Phi(\gamma_{k})$
where $\gamma_{k} = \frac{P_s \omega_{s,k}}{\eta_0} = r\omega_{s,k}c_n\gamma$ is the SNR at $R_k$.
%\begin{align}
%\gamma_{k} = \frac{P_s \omega_{s,k}}{N_0} = r\omega_{s,k}c_n\gamma.
%\end{align}
The error rate at the destination $D$ is given by $\Phi(\gamma_{d})$, where $\gamma_{d}$ is the SNR at $D$. $\gamma_{d}$ is can be computed as
\begin{align}
\gamma_d
%= &\left\{
%\begin{array}{ll}
%	\frac{P_s \omega_{s,d}+P_k \omega_{k,d}}{N_0} & \text{ in the first case,}\\
%	\frac{P_s \omega_{s,d}}{N_0} & \text{ in the second case.}
%\end{array} \right. \nonumber\\
		=& \left\{
\begin{array}{ll}
	(r\omega_{s,d}+(1-r)\omega_{k,d})c_n\gamma & \text{ in the first case,}\\
	 r\omega_{s,d}c_n\gamma & \text{ in the second case.}
\end{array} \right.
\end{align}
Let $\mathbf{P}_{e,n}(\gamma,\omega_{s,k},\omega_{s,d},\omega_{k,d})$ be the error probability conditional on the channel parameters $\omega_{s,k},\omega_{s,d},\omega_{k,d}$ and the relay selection stops at stage $n$ and $R_k$ is selected.
The error probability can be calculated as follows.
\begin{align}
	&\mathbf{P}_{e,n}(\gamma,\omega_{s,k},\omega_{s,d},\omega_{k,d}) \nonumber\\
    = &(1-\Phi(r\omega_{s,k}c_n\gamma)) \Phi((r\omega_{s,d}+(1-r)\omega_{k,d})c_n\gamma) \nonumber\\
     & ~~~~~~+\Phi(r\omega_{s,k}c_n\gamma) \Phi(r\omega_{s,d} c_n\gamma)~.
    \label{oppur1eq2}
\end{align}
The error probability calculated in (\ref{oppur1eq2}) is complex.
However, a proper upper bound can allow us to analyze the diversity gain of our relay selection strategy.
We proceed to obtain an upper bound on the error probability.

Note that from (\ref{eq:omega_n}), the definition of $\omega_k$, we have
%\begin{eqnarray}
%\frac{1}{\omega_n} \geq \frac{1}{2q_2\omega_{s,d}},
%\frac{1}{\omega_n} \geq \frac{1}{2q_1\omega_{n,d}}
%\end{eqnarray}
%for all $n=1,2,\dots,N$. Thus, for all $n=1,2,\dots,N$
\begin{eqnarray}
\omega_{s,d} \geq \frac{\omega_k}{2q_2},
~\omega_{k,d} \geq \frac{\omega_k}{2q_1}~.
\end{eqnarray}
Following which we strike an upper bound for (\ref{oppur1eq2}) as follows.
\begin{align}
	&\mathbf{P}_{e,n}(\gamma,\omega_{s,k},\omega_{s,d},\omega_{k,d}) \nonumber\\
	= &(1-\Phi(r\omega_{s,k}c_n\gamma)) \Phi((r\omega_{s,d}+(1-r)\omega_{k,d})c_n\gamma) \nonumber\\
     & +\Phi(r\omega_{s,k}c_n\gamma) \Phi(r\omega_{s,d} c_n\gamma) \nonumber\\
    \leq & \Phi((1-r)\omega_{k,d}c_n\gamma)  +\Phi(r\omega_{s,k}c_n\gamma)\nonumber\\
     \leq & \Phi(\frac{1-r}{2q_1}\omega_kc_n\gamma) + \Phi(\frac{r}{2q_2} \omega_kc_n\gamma) \nonumber\\
     \leq & A_M\exp(-B_M\frac{1-r}{2q_1}\omega_kc_n\gamma) + A_M\exp(-B_M\frac{r}{2q_2} \omega_kc_n\gamma)~.
\label{eq:upperboundk}
\end{align}
where $A_M,B_M$ are constants depending on the modulation scheme as commonly adopted (e.g., \cite{proakis2001digital}) and the last inequality in (\ref{eq:upperboundk}) follows from the properties of the error probability function $\Phi(\gamma)$.

From (\ref{eq:upperboundk}) we have
\begin{align}
 &\mathbf{P}_{e,n}(\gamma,x) \leq A_M\exp(-B_M\frac{1-r}{2q_1}xc_n\gamma) \nonumber \\
 &~~~+ A_M\exp(-B_M\frac{r}{2q_2} xc_n\gamma) := Q_n(\gamma,x)~
\end{align}
and \eqref{eq:upperboundn} is established.

\end{appendices}

\end{document}